\theoremstyle{plain}
\newtheorem{thm}{Theorem}[section]
\newtheorem{lem}[thm]{Lemma}
\theoremstyle{definition}
\theoremstyle{remark}
\let\given\givenbase
\title{Optimal Gamma density to Obfuscate Quantitative data with Added Noise}
\author[a]{Debolina Ghatak }
\affil[a]{TCG Centres for Research and Education
in Science and Technology}
\author[b]{Debasis Sengupta}
\affil[b]{Indian Statistical Institute Kolkata}
\author[c]{Bimal Roy}
\affil[c]{Indian Statistical Institute Kolkata}
\date{}
\begin{document}
\maketitle
\doublespacing

\begin{abstract}
Protecting the privacy of individuals in a data-set is no less important than making statistical inferences from it. In case the data in hand is quantitative, the usual way to protect it is to add a noise to the individual data values. But, what should be an ideal density used to generate the noise, so that we can get the maximum use of the data, without compromising privacy? In this paper, we deal with this problem and propose a method of selecting a density within the Gamma family that is optimal for this purpose.
\end{abstract}
%\chapter{A new measure of comparison for additive noise model and optimal density to deconvolute a data within Gamma family}
\section{Introduction}

The use of data in making inferences and future predictions has been a popular pursuit for many decades. But, with the advance of various internet activities and due to increasing use of e-data, privacy protection has become more essential than ever. In order to protect the data from intruders, sometimes agencies perturb or mask the data in some way so that the intruder cannot guess the original individual data values, but anybody can get an idea about the underlying statistics like mean, median, variance etc.

In case the data-values corresponding to the sensitive attribute in hand are discrete, one may apply the Post-Randomization method as introduced by Gouweleew et al. (1998)\cite{JKWD} and later discussed by Nayak et al.(2011a,2015,2016)\cite{TBL}\cite{NAS}\cite{TAZ}, Robello-Monedaro (2010)\cite{DJJ}, Andrieu et al. (2003)\cite{CNAM}, Press et al. (2007)\cite{WSWB}, Mares and Shlomo (2014)\cite{MS}. If the sensitive attribute in hand is quantitative, one may either swap the data according to the methods discussed by Dalenius and Reiss (1982)\cite{TDSPR}, Moore (1996)\cite{MRA}, Murlidhar et al. (1999)\cite{KRR}, Sarathy et al. (2002)\cite{RK} or generate synthetic data, i.e., data from the distribution of the true data-set as discussed by Rubin (1993)\cite{DBR}, Reiter and Kinney (2003)\cite{JRSK}, or may simply put (either add or multiply) a noise to the true data-values when the noise is generated independently of the original data-set. While in the former methods, i.e., data swapping or generating synthetic data, one can treat the perturbed data as the original data in making inferences, but in those cases the correlation information of the sensitive attribute with other attributes gets erased, which is not desired. On the other hand, if one puts noise to it, the correlation information is retained, but the distribution changes. However, if the noise distribution is known, along with the parameters involved, then the distribution curve can be estimated from the masked values and masking distribution, provided the error is chosen from a suitable distribution.

Mathematically, let $\{X_1,X_2, \cdots, X_n\}$ be the true data-set which is assumed to be independently distributed and for each $i \in \{1,2, \cdots, n\}$, and $X_i$ follows a certain distribution with unknown cumulative distribution function (CDF) $G(\cdot)$ and density $g(\cdot)$. Let the obfuscating noise $\{Y_i,i \in \{1,2, \cdots, n\}\}$ be a sample from a known distribution with CDF $F(\cdot)$ ( and density $f(\cdot)$) independent of $X_i$. Suppose $Z_i=X_i+Y_i$ instead of $X_i$. $\{Z_i,i \in \{1,2, \cdots, n\}\}$ is the masked data-set which comes from the unknown distribution with CDF $H(\cdot)$. Since $X_i$ and $Y_i$ are independent and continuous random variables, $Z_i$ also has a continuous CDF $H(\cdot)$, which is the convolution of $F(\cdot)$ and $G(\cdot)$. This model is known as the Additive Noise Model.

An interesting problem concerning the additive noise model is the choice of an optimal noise density function $f(\cdot)$ that ensures privacy protection and is good for estimation of $g(\cdot)$. Dealing with this problem as a whole is hard, and the solution would depend on the estimator. An estimator with well-understood large sample properties is the deconvolution kernel density estimator proposed by Carroll and Hall (1988)\cite{CH88}:
\begin{equation}
\hat{g}(x)=\frac{1}{2\pi}\int_{-\infty}^\infty\exp(-itx)\tilde{K}(tb) \frac1n\sum_{i=1}^n\frac{\exp(itZ_i)}{\tilde{f}(t)}dt,\quad -\infty<x<\infty,
\label{Eqn:DKDE}
\end{equation}
where $\tilde{f}$ and $\tilde{K}$ are Fourier transforms of $f$ and a kernel function $K$, respectively, and $b$ is a bandwidth parameter. It is known that if $f(\cdot)$ is a density function whose Fourier transform $\tilde{f}(\cdot)$ satisfies
\begin{equation}
c_1(1+|t|)^{-\tilde{\vartheta}} \leq |\tilde{f}(t)| \leq c_2(1+|t|)^{-\tilde{\vartheta}}  \mbox{, $-\infty<t<\infty$}
\label{Eqn:OrdSmth}
\end{equation}
for some positive constants $c_2>c_1$ and $\tilde{\vartheta}>0$, then the estimator \eqref{Eqn:DKDE}, using a suitably chosen bandwidth $b$, have mean squared error (MSE) with an algebraic rate of convergence (i.e., the MSE converges to 0 with the rate $n^{-\iota}$ for some $\iota>0$; see Meister (2005)~\cite{AM}). Densities satisfying the condition~\eqref{Eqn:OrdSmth} are called ordinary smooth density functions. The rate of convergence happens to be slower for some other noise densities such as Normal and Cauchy. The Laplace distribution is known to have an ordinary smooth density function. The following lemma shows that members of the two sided Gamma Family having density $f_{\vartheta,\eta}(\cdot)$ given by
\begin{equation}
f_{\vartheta,\eta}(x)=\frac{1}{2 \Gamma(\vartheta) \eta^\vartheta}|x|^{\vartheta-1}e^{-\frac{|x|}{\eta}} \mbox{ , for $-\infty<x<\infty$}
\label{Eqn:GammaFamily}
\end{equation}
and indexed by the scale parameter $\eta>0$ and shape parameter $\vartheta>0$ also has ordinary smooth density when $\vartheta \le1$. Note that the special case $\vartheta=1$ corresponds to the Laplace distribution.

\begin{lem}
\label{whygamma}
The density  $f_{\vartheta,\eta}(\cdot)$ given by~\eqref{Eqn:GammaFamily} satisfies the condition \eqref{Eqn:OrdSmth} for some $c_1,c_2>0$ when $\vartheta \leq 1$, but does not satisfy it for any $c_1,c_2>0$ when $\vartheta>1$.
\end{lem}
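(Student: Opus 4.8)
The plan is to compute the Fourier transform $\tilde f_{\vartheta,\eta}$ in closed form and then read off exactly when the two-sided bound \eqref{Eqn:OrdSmth} can hold. First I would exploit the evenness of the density \eqref{Eqn:GammaFamily}: splitting the defining integral at the origin and using the elementary identity $\int_0^\infty x^{\vartheta-1}e^{-(1/\eta \mp it)x}\,dx = \Gamma(\vartheta)(1/\eta \mp it)^{-\vartheta}$ gives
\[
\tilde f_{\vartheta,\eta}(t) = \tfrac12\bigl[(1-i\eta t)^{-\vartheta} + (1+i\eta t)^{-\vartheta}\bigr].
\]
Writing $1+i\eta t = \sqrt{1+\eta^2t^2}\,e^{i\arctan(\eta t)}$ and combining the two conjugate terms, this collapses to the real expression
\[
\tilde f_{\vartheta,\eta}(t) = (1+\eta^2 t^2)^{-\vartheta/2}\cos\bigl(\vartheta\arctan(\eta t)\bigr).
\]
The whole argument then reduces to analysing these two factors separately.

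Second, I would dispose of the polynomial factor. For fixed $\eta>0$ the quantities $1+\eta^2t^2$ and $(1+|t|)^2$ are comparable up to positive multiplicative constants (both are positive, continuous, and grow like $t^2$), so $(1+\eta^2t^2)^{-\vartheta/2}\asymp(1+|t|)^{-\vartheta}$. Since $|\cos|\le 1$, the upper bound $|\tilde f_{\vartheta,\eta}(t)|\le c_2(1+|t|)^{-\vartheta}$ holds for every $\vartheta>0$; the entire content of the lemma therefore lives in the cosine factor and in the lower bound.

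Third comes the dichotomy. As $t$ ranges over $\mathbb{R}$ the phase $\arctan(\eta t)$ ranges over $(-\pi/2,\pi/2)$, so $\vartheta\arctan(\eta t)$ ranges over $(-\vartheta\pi/2,\vartheta\pi/2)$. When $\vartheta<1$ this interval is strictly contained in $(-\pi/2,\pi/2)$, so the cosine stays positive with infimum $\cos(\vartheta\pi/2)>0$; hence $|\tilde f_{\vartheta,\eta}(t)|\ge \cos(\vartheta\pi/2)\,(1+\eta^2t^2)^{-\vartheta/2}$, which yields the lower bound in \eqref{Eqn:OrdSmth} with exponent $\tilde\vartheta=\vartheta$. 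The borderline $\vartheta=1$ I would treat by hand: there $\cos(\arctan(\eta t))=(1+\eta^2t^2)^{-1/2}$ and the two factors multiply to the Laplace characteristic function $(1+\eta^2t^2)^{-1}$, which is ordinary smooth with $\tilde\vartheta=2$. For $\vartheta>1$, by contrast, the value $\pi/(2\vartheta)<\pi/2$ lies in the range of $\arctan(\eta t)$, so at the finite, nonzero point $t_0=\eta^{-1}\tan(\pi/(2\vartheta))$ the phase equals $\pi/2$ and $\tilde f_{\vartheta,\eta}(t_0)=0$. Because the lower envelope $c_1(1+|t|)^{-\tilde\vartheta}$ is strictly positive everywhere, no choice of $c_1>0$ (and no exponent) can satisfy it at $t_0$, so \eqref{Eqn:OrdSmth} fails.

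The main obstacle I anticipate is not any single estimate but getting the characteristic function into the product form $(1+\eta^2t^2)^{-\vartheta/2}\cos(\vartheta\arctan(\eta t))$ cleanly, since everything afterwards is a transparent consequence: the location of the zero is dictated entirely by whether $\pi/(2\vartheta)$ falls inside $(0,\pi/2)$, i.e.\ by the sign of $\vartheta-1$. The one point that deserves care is the branch of the complex power $(1\pm i\eta t)^{-\vartheta}$; I would fix the principal branch, justified because $1\pm i\eta t$ has positive real part and so its argument stays in $(-\pi/2,\pi/2)$, which makes the passage to polar form rigorous.
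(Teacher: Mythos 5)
Your proposal is correct and follows essentially the same route as the paper: compute $\tilde f_{\vartheta,\eta}(t)=(1+\eta^2t^2)^{-\vartheta/2}\cos(\vartheta\arctan(\eta t))$, compare $(1+\eta^2t^2)^{1/2}$ with $1+|t|$ up to constants, bound the cosine below by $\cos(\vartheta\pi/2)$ when $\vartheta<1$, and exhibit the zero at $t_0=\eta^{-1}\tan(\pi/(2\vartheta))$ when $\vartheta>1$. The one substantive difference is your separate treatment of the borderline case $\vartheta=1$: the paper's lower-bound constant $c_1=\eta^{-\vartheta}\cos(\pi\vartheta/2)$ degenerates to $0$ there, so its argument is strictly speaking vacuous at $\vartheta=1$, whereas your observation that $\cos(\arctan(\eta t))=(1+\eta^2t^2)^{-1/2}$ collapses the transform to $(1+\eta^2t^2)^{-1}$ gives the correct conclusion with exponent $\tilde\vartheta=2$ (which is admissible since \eqref{Eqn:OrdSmth} only asks for \emph{some} $\tilde\vartheta>0$). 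Your version is therefore slightly more careful than the paper's on this point; everything else is interchangeable, with the paper trading your generic comparability statement for explicit constants via a case split on $\eta\gtrless1$.
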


\begin{proof} See Appendix.\end{proof}

Lemma~\ref{whygamma} suggests that the two-sided gamma family of distributions would be a restricted but reasonable class to search for a suitable noise distribution for obfuscation. Once a search criterion is fixed, one has only to do a parametric optimization rather than a functional one.

%Thus, we restrict our search range of $f(\cdot)$ to the Gamma family and instead of finding an optimal $f(\cdot)$ we try to find an optimal value of the pair ($\vartheta$,$\eta$) that might be ideal for obfuscation.

In the next section we point out some shortcomings of using the measures of confidentiality given by Tendick (1991) \cite{TDK}, Spruill (1983) \cite{NCS} and Ghatak and Roy (2018)\cite{GR} in the present problem, and provide a new measure. It turns out that the Laplace distribution is not necessarily the optimal choice. In Section 3 we discuss the procedure for estimating the distribution curve from convoluted data. In Section 4, we discuss our approach of finding the optimal density function to obfuscate a given data-set within the Gamma family. In Section 5, we give some simulation results to illustrate our work and finally end with some concluding remarks in Section 6.

%If we say, the optimal distribution to obfuscate the data in hand is the one which gives minimum mean-squared error in density estimation maintaining a certain amount of protection, then based on this idea, we defined a procedure to find an ideal density to obfuscate data within the Gamma family. We also observed that although Laplace is the best known density to obfuscate data till now; here, Laplace is not the answer always. There were instances in our simulations when the optimal distribution was not Laplace. We also calculated the Monte-Carlo estimates of the errors in estimating the distribution curves, and checked the differences for these instances.

\section{Measures of confidentiality}
\subsection{Problems with existing measures}

At first, given any shape parameter $\vartheta$, we want to find a scale parameter such that the additive noise distribution~\eqref{Eqn:GammaFamily} having this pair of parameters can protect the data sufficiently. However, to do so, we would need a measure of disclosure risk to the data. Tendick (1991)~\cite{TDK} proposed a measure of confidentiality measure when true data and noise are both normal. The paper suggests the use of $\rho^2=\frac{\sigma^2}{\sigma^2+\sigma_N^2}$ as a measure of confidentiality, where $X \sim N(\mu,\sigma^2)$, $Y \sim N(0,\sigma_N^2)$. If the value of $\rho^2$ is large, the data are less obfuscated. Since the normal density is not ordinary smooth, a measure defined for the normal case would be of limited utility. In any case, this gross measure give little indication of the worst-case level of confidentiality. Spruill et al.(1983)~\cite{NCS} suggested the measure of confidentiality given by one minus the fraction of cases where the nearest match of the obfuscated value among the original data-set occurs in the current position. This measure is clearly empirical. If one uses its expectation, that would be a complicated function of the distribution of $X$ and $Y$. Depending on the noise density, the probability of best match occurring in the correct place might be rather small. The measure of confidentiality that has earned the most attention in the last decade is the idea of differential privacy given by Dwork et. al. (2009)\cite{CS}. There have been some work on statistical mechanisms that ensure differential privacy including the smoothed histogram mechanism and orthogonal series density estimation \cite{WZ}\cite{RH}. But these mechanisms encounter a lot of loss in utility of data making it of no use to the statistician.

\begin{figure}
\centering
\includegraphics[scale=0.6]{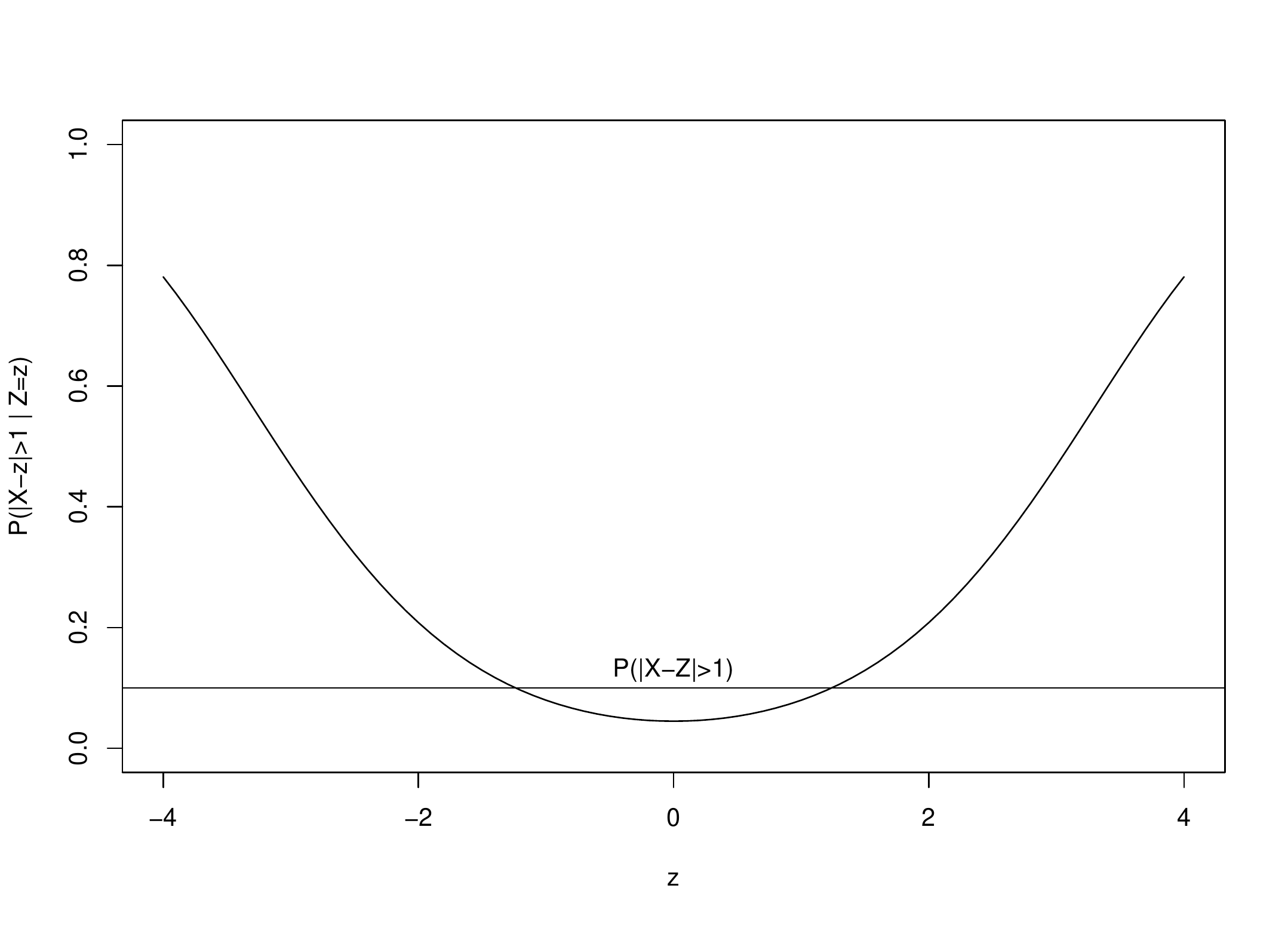}
\caption{Figure showing $P(|X-z|>\epsilon\given Z=z)$ for different $z$, when $X$ is standard normal and $Y$ is Laplace with such a scale parameter that $P(|Y|>\epsilon)=\delta$ for $\epsilon=1$ and $\delta=0.1$}
\label{Fig:alpha}
\end{figure}

Ghatak and Roy (2018)\cite{GR} proposed that the obfuscating distribution is chosen to make sure that
\begin{equation}
P[|Y|<\epsilon]=1-\delta,
\label{Eqn:FixedQuantile}
\end{equation}
where $Y$ has density given by \eqref{Eqn:GammaFamily} and $\delta$, $\epsilon$ are suitably chosen. This method can be used to find a scale parameter $\eta$ which, for any given $\vartheta$, would produce obfuscating noise above a specified threshold $\epsilon$ with specified probability $\delta$. However, obfuscation is fundamentally a matter of limiting the chances of anyone guessing accurately the value of $X$ from $Z$, rather than ensuring a certain amount of noise. In order to illustrate the distinction, let us consider standard normal $X$ and Laplace $Y$. If we set $\epsilon=1$ and $\delta=0.1$, then the scale parameter of $Y$ has to be $\eta=0.4342945$. Obfuscation would be good if for any given value of $Z$, the actual value $X$ is at least $\epsilon$ away from $Z$ with probability at least $\delta$. The plot of the conditional probability $P(|X-z|>\epsilon\given Z=z)$ against $z$ is shown in Figure~\ref{Fig:alpha}. The unconditional probability $P(|X-Z|>\epsilon)=\delta$ is shown as a horizontal straight line, for reference. It transpires that even though the unconditional probability is 0.1, the conditional probability can be much smaller than 0.1 when $z$ is small. In particular, the conditional probability is only 0.045 for $z=0$. Thus, smaller absolute values of $Z$ entail much weaker level of obfuscation than what the criterion \eqref{Eqn:FixedQuantile} suggests. 

This problem motivated us to think of a better measure of confidentiality for the problem at hand. Frank (1978)~\cite{FNK} and Paass (1988)~\cite{GP} have discussed the importance of the conditional distribution of the true data density given the convoluted data (here, the conditional distribution of $X$ given $Z$) in determining the uncertainty apparent to the intruder. Our new measure will be based on this conditional distribution.

\subsection{A new measure}

Suppose the additive noise distribution satisfies the following assumptions.
\paragraph{Assumptions (A1)}
\begin{itemize}
\item[(i)] The noise density is an even function, i.e., $f(-y)=f(y)$ for all $y \in \mathbb{R}$.
\item[(ii)] $E[Y]$ exists (in which case it is equal to zero).
\item[(iii)] $0<Var(Y)<\infty$.
\end{itemize}

The Minimum Mean Squared Error (MMSE) predictor of $X$ in terms of $Z$ is $E[X \given Z]$. This is not necessarily equal to $Z$. Nevertheless, $Z$ is a simple and unbiased predictor of $X$ under the assumption A1(ii), and it does not require any knowledge of the distribution of $X$ (Better predictors may be constructed on the basis of that knowledge; e.g., if $Z=1.05$ and the range of the true data is known to be $[0,1]$, then the constant~1 is a better predictor than $Z$.) However, it is to be noted that it is reasonable to look for a measure of confidentiality that considers the deviation of $X$ from the observed value of~$Z$.

Starting with the assumptions A1, and using the notation $\sigma_X^2$ for the variance of $X$, consider the probability of $X$ lying in $\epsilon\sigma_X$ boundary of its predicted value $z$:
\begin{equation}
M_{z,\epsilon}^{(f)} =P[|X-Z|<\epsilon\sigma_X \given Z=z] =\dfrac{\int_{-\epsilon\sigma_X }^{\epsilon\sigma_X}{g(z-x)f(x)dx}}{\int_{-\infty}^{\infty}{g(z-x)f(x)dx}},
\label{Eqn:M}
\end{equation}
for a chosen threshold $\epsilon>0$. A high value of this probability for small $\epsilon$ signifies a high risk of disclosure. For fixed $f$ and $z$, the function $M_{z,\epsilon}^{(f)}$ is a continuous and non-decreasing function of $\epsilon$, with
$$\lim_{\epsilon\rightarrow0}M_{z,\epsilon}^{(f)}=0,\quad\mbox \quad\lim_{\epsilon\rightarrow\infty}M_{z,\epsilon}^{(f)}=1.$$
Thus, for each fixed $f$ and $z$ and $0<\delta<1$, one may find an $\epsilon^{\star}$ such that $M_{z,\epsilon^{\star}}^{(f)}=\delta$. For any fixed $\delta$ we would want this $\epsilon^{\star}$ to be as large as possible. Assuming obfuscation is needed for all values of $z \in \mathbb{R}$ (i.e., both low and high values need to be protected) a measure of comparison may be proposed as
$$ \mu^{(f,\delta)}=\min \{\epsilon>0 : \sup_{z \in \mathbb{R}}M_{z,\epsilon}^{(f)} \geq \delta \}.$$
This measure is the smallest multiplier $\epsilon$ such that $P[|X-Z|>\epsilon\sigma_X \given Z=z]\ge1-\delta$ for all $z\in \mathbb{R}$. Noise densities $f$ producing larger $\mu^{(f,\delta)}$ are more suitable for obfuscation. 

As an example of the computations involved, let us consider the case $X \sim N(0,\sigma_X^2)$ and $Y \sim N(0,\sigma_Y^2)$.

\begin{lem}
(Normal-Normal Obfuscation.) If $X \sim N(0,\sigma_X^2)$ and $Y \sim N(0,\sigma_Y^2)$ then for fixed $\delta$ 
$$\mu^{(f,\delta)}
=\tau_{\frac{1+\delta}{2}}\left/\sqrt{1+\frac{\sigma_X^2}{\sigma_Y^2}}\right.,$$
where $\tau_x$ is the  $x^{th}$ quantile of a standard normal variable.
\label{Res:NNobfs}
\end{lem}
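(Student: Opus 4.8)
The plan is to exploit the joint Gaussianity of the pair $(X,Z)$ and reduce the computation of $M_{z,\epsilon}^{(f)}$ to the conditional law of $X$ given $Z=z$. First I would record that $Z=X+Y\sim N(0,\sigma_X^2+\sigma_Y^2)$ and that $\mathrm{Cov}(X,Z)=\mathrm{Var}(X)=\sigma_X^2$, so $(X,Z)$ is bivariate normal. The standard conditioning formula then gives that, given $Z=z$, the variable $X$ is normal with mean $\mu_z=\frac{\sigma_X^2}{\sigma_X^2+\sigma_Y^2}\,z$ and variance $v=\frac{\sigma_X^2\sigma_Y^2}{\sigma_X^2+\sigma_Y^2}$.

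Next I would substitute this into the definition \eqref{Eqn:M}. Writing $M_{z,\epsilon}^{(f)}=P[\,z-\epsilon\sigma_X<X<z+\epsilon\sigma_X\given Z=z\,]$ and standardising by $\sqrt{v}$, this becomes
$$M_{z,\epsilon}^{(f)}=\Phi\!\left(\frac{(z-\mu_z)+\epsilon\sigma_X}{\sqrt{v}}\right)-\Phi\!\left(\frac{(z-\mu_z)-\epsilon\sigma_X}{\sqrt{v}}\right),$$
where $\Phi$ is the standard normal CDF. Introducing $a=(z-\mu_z)/\sqrt{v}$ and $c=\epsilon\sigma_X/\sqrt{v}$, the right-hand side is exactly $\Phi(a+c)-\Phi(a-c)$, i.e. the mass a standard normal places on an interval of fixed half-width $c$ centred at $a$.

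The crux of the argument, and the step I expect to be the main obstacle, is to evaluate $\sup_{z\in\mathbb{R}}M_{z,\epsilon}^{(f)}$. Since $z-\mu_z=z\,\sigma_Y^2/(\sigma_X^2+\sigma_Y^2)$ ranges over all of $\mathbb{R}$ as $z$ does, this supremum equals $\sup_{a\in\mathbb{R}}\bigl[\Phi(a+c)-\Phi(a-c)\bigr]$. Differentiating in $a$ gives $\phi(a+c)-\phi(a-c)$, which vanishes precisely at $a=0$ by the symmetry and strict unimodality of the standard normal density $\phi$ (the equation $\phi(a+c)=\phi(a-c)$ with $c>0$ forces $|a+c|=|a-c|$, hence $a=0$); this critical point is the maximum, so the supremum is $2\Phi(c)-1$, attained at $z=0$. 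Intuitively, a symmetric unimodal density accumulates the most probability over a window of fixed length when the window is centred at its mode.

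Finally I would compute $c$ explicitly. Substituting $v=\sigma_X^2\sigma_Y^2/(\sigma_X^2+\sigma_Y^2)$ yields $c=\epsilon\sqrt{\sigma_X^2+\sigma_Y^2}/\sigma_Y=\epsilon\sqrt{1+\sigma_X^2/\sigma_Y^2}$. Because $2\Phi(c)-1$ is strictly increasing in $\epsilon$, the defining condition of $\mu^{(f,\delta)}$ reduces to $2\Phi(c)-1=\delta$, i.e. $\Phi(c)=\tfrac{1+\delta}{2}$, so that $c=\tau_{\frac{1+\delta}{2}}$. Solving for $\epsilon$ gives $\mu^{(f,\delta)}=\tau_{\frac{1+\delta}{2}}\big/\sqrt{1+\sigma_X^2/\sigma_Y^2}$, which is precisely the claimed formula.
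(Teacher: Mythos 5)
Your proof is correct and follows essentially the same route as the paper: both identify the conditional law of $X$ given $Z=z$ (the paper by completing the square in the ratio of integrals, you by quoting the bivariate-normal conditioning formula), express $M_{z,\epsilon}^{(f)}$ as a difference of two $\Phi$ values, locate the maximum at $z=0$, and solve $2\Phi(c)-1=\delta$ for $\epsilon$. Your reduction of the maximization to $\sup_a\bigl[\Phi(a+c)-\Phi(a-c)\bigr]$ via the symmetry of $\phi$ is in fact a cleaner justification of the $z=0$ optimum than the paper's second-derivative computation, but the substance is the same.
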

\begin{proof} See Appendix.\end{proof}
It transpires that the new measure of confidentiality is a monotone function of Tendick's (1991)~\cite{TDK} measure defined for normal-normal obfuscation. The measure $\mu^{(f,\delta)}$ goes down to 0 as $\sigma_Y/\sigma_X$ goes to~0 (no obfuscation), while it goes up to $\tau_{\frac{1+\delta}{2}}$ as $\sigma_Y/\sigma_X$ goes to infinity (maximum obfuscation).

\subsection{An empirical measure of confidentiality}

While there is a closed form expression of $\mu^{(f,\delta)}$ in the normal-normal case, the computation may be more difficult when $Y$ has the gamma distribution and $X$ has some other specified distribution. In some situations, the distribution of $X$ may not even be known. For this reason, an empirical version of $\mu^{(f,\delta)}$ is now worked out.

One can rewrite \eqref{Eqn:M} as
$$M_{z,\epsilon}^{(f)} =\dfrac{\int_{z-\epsilon\sigma_X }^{z+\epsilon\sigma_X}{f(z-x)g(x)dx}}{\int_{-\infty}^{\infty}{f(z-x)g(x)dx}} =\dfrac{E[f(z-X)\mathbb{I}_{[z-\epsilon\sigma_X < X < z+\epsilon\sigma_X]}]}{E[f(z-X)]},
$$
where $\mathbb{I}_{B}$ denotes the indicator of the event $B$. In order to obtain an empirical version of this measure, one can replace the expectations in the numerator and denominator by empirical averages, and replace $\sigma_X$ by the square root of $\hat{\sigma}_X^2=\frac{1}{n-1}\sum_{j=1}^n{(X_j-\bar{X})^2}$, where $\bar{X}=\sum_{j=1}^n{X_j}$. This substitution leads us to the measure
\begin{equation}
\hat{\mu}_{X_1,X_2,\cdots,X_n}^{(f,\delta)}=\min \{ \epsilon>0 : \sup_{z \in \mathbb{R}}\hat{M}^{(f)}_{(z,\epsilon)} \geq \delta\}
\label{Eqn:Meas}
\end{equation}
where,
\begin{equation}
\hat{M}^{(f)}_{(z,\epsilon)} =\frac{\sum_{i=1}^n{f(z-X_i)\mathbb{I}_{[z-\epsilon\hat{\sigma}_X \leq X_i \leq z+ \epsilon\hat{\sigma}_X]}}}{\sum_{i=1}^n{f(z-X_i)}}.
\label{Eqn:Meas_stat}
\end{equation}
Computational problems may be avoided by limiting the supremum to the set 
$$\left\{z \in \mathbb{R}: \sum_{i=1}^n{f(z-X_i)}>0\right\}.$$

\section{Estimating density from obfuscated data}

%The most common procedure of selecting the bandwidth for deconvolution problems is the cross validation method as discussed in Meister (2009)~\cite{AM}. But practically this method becomes hard to implement in most cases due to the tedious and time-consuming computations. So, here we will follow the Kernel and bandwidth selection procedure as discussed by Delaigle and Gijbels (2002)~\cite{ADIG} since it is easy to implement and effective in most cases.
To estimate the density function of original data, we use the deconvolution kernel density estimator \eqref{Eqn:DKDE} for reasons indicated in Section~1. This entails selection of the kernel and the bandwidth. For this purpose, we follow the procedure of Delaigle and Gijbels (2004)~\cite{ADIG}.
The chosen Kernel function has a Fourier transform given by, $\tilde{K}(t)=(1-t^2)^3\mathbb{I}_{[-1,1]}$. An exact expression of $K(\cdot)$ is available in Fan (1992)~\cite{FAN}. The bandwidth is chosen by minimizing the asymptotic integrated mean squared error (AIMSE), which is an approximation of the integrated mean squared error, i.e., $\int{E(\hat{g}(y)-g(y))^2dy}$, or, $\int{Var(\hat{g}(y))dy}$. The approximation, derived by Stefanski and Carroll (1990)~\cite{STC}, can be expressed as
$$ AIMSE(b)=\frac{1}{2\pi n}\int_{-\infty}^{\infty}{\frac{|\tilde{K}(tb)|^2}{|\tilde{f}(t)|^2}dt} + \frac{b^4}{4}\mu^2_{K,2}R(g^{\prime \prime}),$$
where $\mu_{K,2}=\int_{-\infty}^{\infty}(x^2K(x)dx)$ and $R(g^{\prime \prime})=\int_{-\infty}^{\infty}{g^{\prime \prime 2}(x)dx}$. Since $R(g^{\prime \prime})$ is unknown we have to estimate it. Silverman (1986)~\cite{BWS} proposed the use a normal reference to estimate it in the error free case, i.e., assume $g(\cdot)$ is normal to estimate $R(g^{\prime \prime})$. This can also be used here and an estimate for $R(g^{\prime \prime})$ is given by $\hat{R}(g^{\prime \prime})=0.375\hat{Var(X)}^{-5/2}\pi^{-1/2}$.

\bigskip

The expression of $AIMSE(b)$ using error from Gamma family and kernel $K(\cdot)$ is given by the following equation,

$$ AIMSE(b)=\frac{1}{\pi n \eta}\int_{0}^{\tan^{-1}(\eta/b)}{\dfrac{(1-\frac{b^2}{\eta^2}\tan^2(\theta))^6}{((\cos(\theta)^{\vartheta+1})\cos(\vartheta\theta))^2}d\theta} + 11520 \frac{b^4}{4} R(g^{\prime\prime}).$$

Minimizing the above expression w.r.t. $b$, we find a bandwidth $b^{\star}$ for kernel estimation. Estimated density function is given by the following expression.
$$ \hat{g}(x)=\frac{1}{\eta \pi n}\sum_{j=1}^n{\int_{0}^{\tan^{-1}(\eta/b^{\star})}{\cos{(\frac{tan(\theta)(x-Z_j)}{\eta})}\dfrac{(1-\frac{b^{\star_2}}{\eta^2}\tan^2(\theta))^3}{((\cos(\theta)^{\vartheta+2})\cos(\vartheta\theta))}d\theta}} $$
and once we get an estimate of $g(x)$ we can integrate it to get an estimate of the c.d.f. function $G(\cdot)$. Here, the integration can be done numerically using any statistical software.
\section{Choosing an optimal Gamma density for Obfuscation}

While choosing an optimal density for obfuscation, one needs to consider at first, the protection of the data from any possible intruder. It is also essential to check how the protected data can be used for statistical inference.

To look, at first, at the aspect of data protection, let us consider a data  set $\{X_1,X_2,\cdots , X_n\}$, having unknown density $g$, convoluted with an error having density function $f$.
%given by Equation \eqref{Eqn:GammaFamily} with $\vartheta<1$ 
The probability $M_{z,\epsilon}^{(f)}$ given by Equation~\eqref{Eqn:M} is unknown for unknown $g$. But using the data it can be estimated by it's empirical form given by Equation \eqref{Eqn:Meas_stat}.

If the error is chosen from the Gamma family, i.e., $f$ can be chosen according to Equation~\eqref{Eqn:GammaFamily}, $M_{z,\epsilon}^{(f)}$ becomes a function of $(\vartheta, \eta)$ and can be estimated by the following statistic.

$$ \hat{M}_{(z,\epsilon)}^{(\vartheta,\eta)} = \begin{cases}
\dfrac{\sum_{i=1}^n{f_{(\vartheta,\eta)}(z-X_i)}\mathbb{I}_{[z-\hat{\sigma}\epsilon < X_i < z+\hat{\sigma}\epsilon]}}{\sum_{i=1}^n{f_{(\vartheta,\eta)}(z-X_i)}} & \mbox{ if $z \in \mathbb{R} \backslash \{X_1,X_2,\cdots , X_n\}$} \\
\mbox{Undefined} & \mbox{ if $z \in \{X_1,X_2,\cdots , X_n\}$}
\end{cases}
$$

We note that the fact that $\hat{M}_{(z,\epsilon)}^{(\vartheta,\eta)}$ is undefined for some values of $z$ is not a problem here. This is because the probability of the convoluted data falling into any finite set of points is 0; $Z$ being absolutely continuous. Now for each fixed $z$ and $\epsilon$ we get a measure estimate $M_{z,\epsilon}^{(\vartheta,\eta)}$. Note that as $z \rightarrow -\infty$ or $\infty$, $M_{z,\epsilon}^{(\vartheta,\eta)} \rightarrow 0$ and the value of $M_{z,\epsilon}^{(\vartheta,\eta)}$ lies between 0 and 1 for all $z \in \mathbb{R}$. Thus, $M_{z,\epsilon}^{(\vartheta,\eta)}$ is a bounded function of $z$ with decreasing tails. Thus, there must exist some $z$ for which it achieves a supremum value. Taking a maximum over the possible range of $z$ we get an estimate of the supremum. %Let it occur for $z=z^{\star}$ and let $\mathcal{Y}_i(z)$ be the $i^{th}$ order statistic of $\{z-X_1,z-X_2,\cdots, z-X_n\}$. %for which we get a corresponding $\mathcal{Y}_i^{\star}$.
%Denoting $\mathcal{Y}_i(z^{\star})$ by $\mathcal{Y}_i^{\star}$, we write the estimate of $\sup_{z \in \mathbb{R}}M_{z,\epsilon}^{(\vartheta,\eta)} $ as follows,
%$$ \sup_{z \in \mathbb{R}\backslash \{X_1,X_2,\cdots, X_n\}}\hat{M}_{z,\epsilon}^{(\vartheta,\eta)} \hat{=} \dfrac{\sum_{i=1}^n{f_{(\vartheta,\eta}(\mathcal{Y}_i^{\star})}\mathbb{I}_{[\mathcal{Y}_i \leq \hat{\sigma}\epsilon]}}{\sum_{i=1}^n{f_{(\vartheta,\eta)}(\mathcal{Y}_i^{\star})}}.$$

%Now, choosing an $n^{\star}$ such that
%$$ \dfrac{\sum_{i=1}^{n^{\star}}{f_{(\vartheta,\eta)}(\mathcal{Y}_i^{\star})}}{\sum_{i=1}^n{f_{(\vartheta,\eta)}(\mathcal{Y}_i^{\star})}} \geq \delta \mbox{ , but } \dfrac{\sum_{i=1}^{n^{\star}-1}{f_{(\vartheta,\eta}(\mathcal{Y}_i^{\star})}}{\sum_{i=1}^n{f_{(\vartheta,\eta)}(\mathcal{Y}_i^{\star})}} < \delta$$

%$\sup_{z \in \mathbb{R}\backslash \{X_1,X_2,\cdots, X_n\}}\hat{M}_{z,\epsilon}^{(\vartheta,\eta)} \geq \delta$ for $\epsilon=\frac{\mathcal{Y}_{n^{\star}}^{\star}}{\sigma}$. 
% and for large $n$ this must be very close to $\mu^{(f)}_{X_1,X_2,\cdots,X_n}$ by its definition in Equation~\eqref{Eqn:Meas}.

Note that, $\min_{\epsilon >0} \{ \sup_{z \in \mathbb{R}\backslash \{X_1,X_2,\cdots, X_n\}}\hat{M}_{z,\epsilon}^{(\vartheta,\eta)} \geq \delta \}=Q$ implies,
$$\sup_{z \in \mathbb{R}\backslash \{X_1,X_2,\cdots, X_n\}}\hat{M}_{z,Q}^{(\vartheta,\eta)} \geq \delta$$
and, for $\epsilon<Q$,
$$\sup_{z \in \mathbb{R}\backslash \{X_1,X_2,\cdots, X_n\}}\hat{M}_{z,\epsilon}^{(\vartheta,\eta)} < \delta.$$

For fixed $\vartheta$,$Q$, one can easily observe that,
$$\sup_{z \in \mathbb{R}\backslash \{X_1,X_2,\cdots, X_n\}}\hat{M}_{z,Q}^{(\vartheta,\eta)} = \sup_{z \in \mathbb{R}\backslash \{X_1,X_2,\cdots, X_n\}}\dfrac{\sum_{i=1}^n{f_{(\vartheta,\eta)}(|z-X_i|)}\mathbb{I}_{[|z-X_i| \leq \hat{\sigma}Q]}}{\sum_{i=1}^n{f_{(\vartheta,\eta)}(|z-X_i|)}} $$
is continuous in $\eta$ as $f$ is continuous in $\eta$.
Thus, if we want for a fixed $\delta$, $Q$, $\mu^{(f,\delta)}=Q$, then we can approximate it for fixed $\vartheta$ by finding a corresponding $\eta_\vartheta$ such that the estimate of $\mu^{(f,\delta)}$ for such ($\vartheta,\eta$) is $Q$. This can be done easily using any standard root solving method in a standard statistical software to solve the equation:

$$ \sup_{z \in \mathbb{R}\backslash \{X_1,X_2,\cdots, X_n\}}\hat{M}_{z,Q}^{(\vartheta,\eta)} = \delta $$

Now, to look into the statistical usefulness of the data after obfuscation, we consider the error in estimation of true density curve. To compare the estimation error due to the choice of $f(\cdot)$, we use the Mean Integrated Squared Error ($MISE$) for estimating $g(\cdot)$ for a given choice of Kernel $K$ and bandwidth $b$. The expression for the same was discussed by Meister (2009)~\cite{AM} and is given by $E[\int(\hat{g}(x)-g(x))^2dx]$ which can be written as, 
\begin{eqnarray*}
MISE(g,\hat{g}) &=& \frac{1}{2\pi n}\left[\int_{-\infty}^{\infty}{\frac{|\tilde{K}(tb)|^2}{|\tilde{f}(t)|^2}dt} -\int_{-\infty}^{\infty}|\tilde{K}(tb)|^2|\tilde{g}(t)|^2dt\right]\\
 &&+ \frac{1}{2\pi}\int_{-\infty}^{\infty}{|\tilde{K}(tb)-1|^2 |\tilde{g}(t)|^2 dt}
\end{eqnarray*}
It is easy to observe that $MISE$ depends on $f(\cdot)$ through the term $\int_{-\infty}^{\infty}{\frac{|\tilde{K}(tb)|^2}{|\tilde{f}(t)|^2}dt}$. Thus, minimizing $MISE$ is equivalent to minimizing $\int_{-\infty}^{\infty}{|\frac{\tilde{K}(tb)|^2}{|\tilde{f}(t)|^2}dt}$ with respect to $f(\cdot)$ and for our problem it reduces to minimizing the term $\int_{-\infty}^{\infty}{\frac{|\tilde{K}(tb)|^2}{|\tilde{f}_{(\vartheta,\eta(\vartheta))}(t)|^2}dt}$ with respect to $\vartheta$. $\eta(\vartheta)$ can be chosen to be the minimum scale parameter value that satisfies $\mu^{(\vartheta)}_\delta <Q$ for some preassigned $Q>0$ and $0<\delta<1$.

The problem can thus be viewed as a two parameter graph over $0<\vartheta<1$ and $\eta>0$ such that any ($\vartheta,\eta$) pair above the graph can be used for obfuscation. The idea is to choose an ideal pair ($\vartheta^{\star},\eta^{\star}$) for obfuscation which minimizes the $MISE$ among these eligible pairs. To get a clear view of the idea, one may look at the graphs in Figure 2. The region above the curve in each graph is the region of pairs of $(\vartheta,\eta)$ which can be used for obfuscation and the dot indicates the point ($\vartheta^{\star},\eta^{\star}$) which minimizes the MISE among all the points in this region.

\section{Simulation Results}

\begin{figure}
\centering
\includegraphics[scale=0.5]{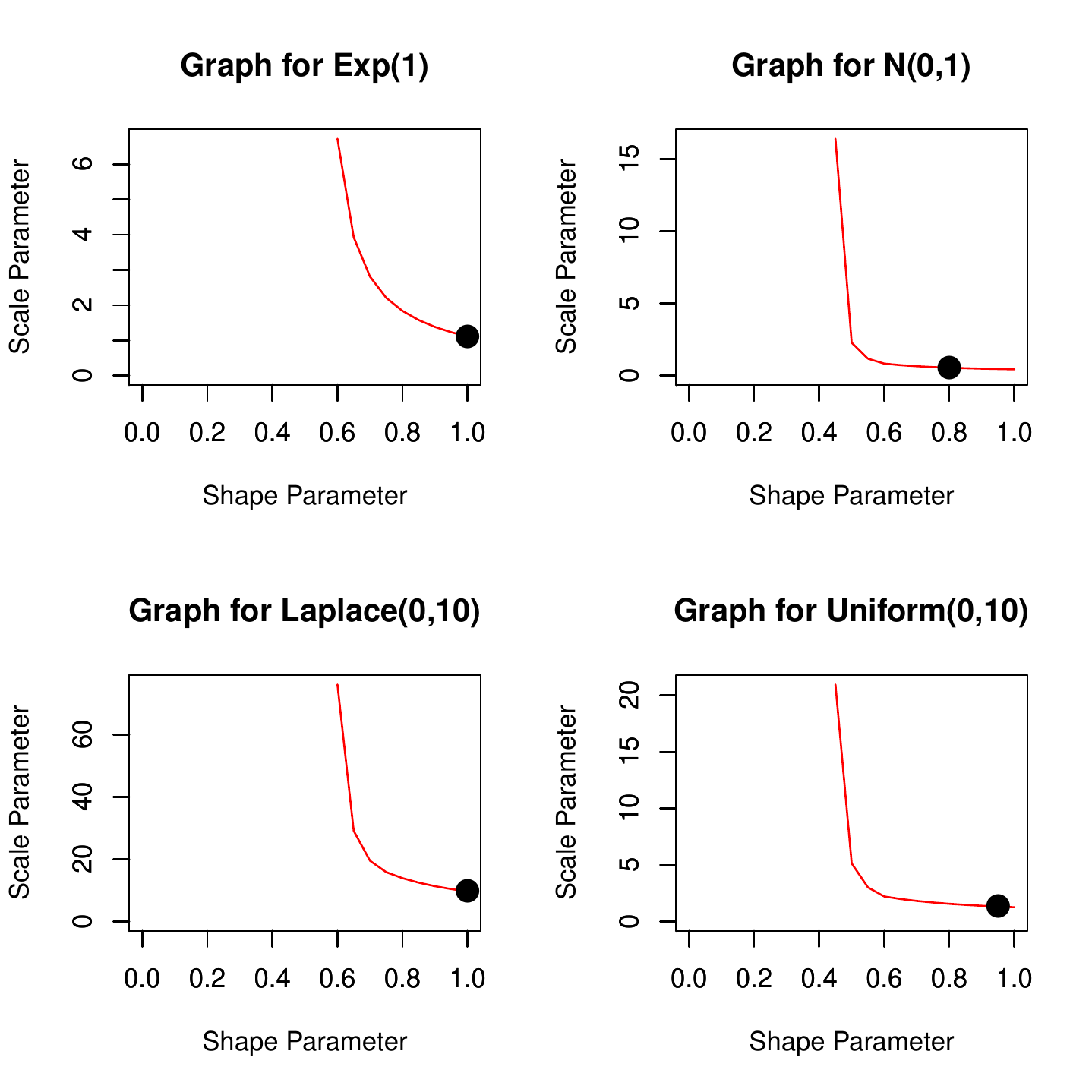}
\caption{Figure showing optimal gamma parameter pair for four different simulated data-sets}
\label{fig:pair}
\end{figure}

\begin{figure}
\centering
r\includegraphics[scale=0.3]{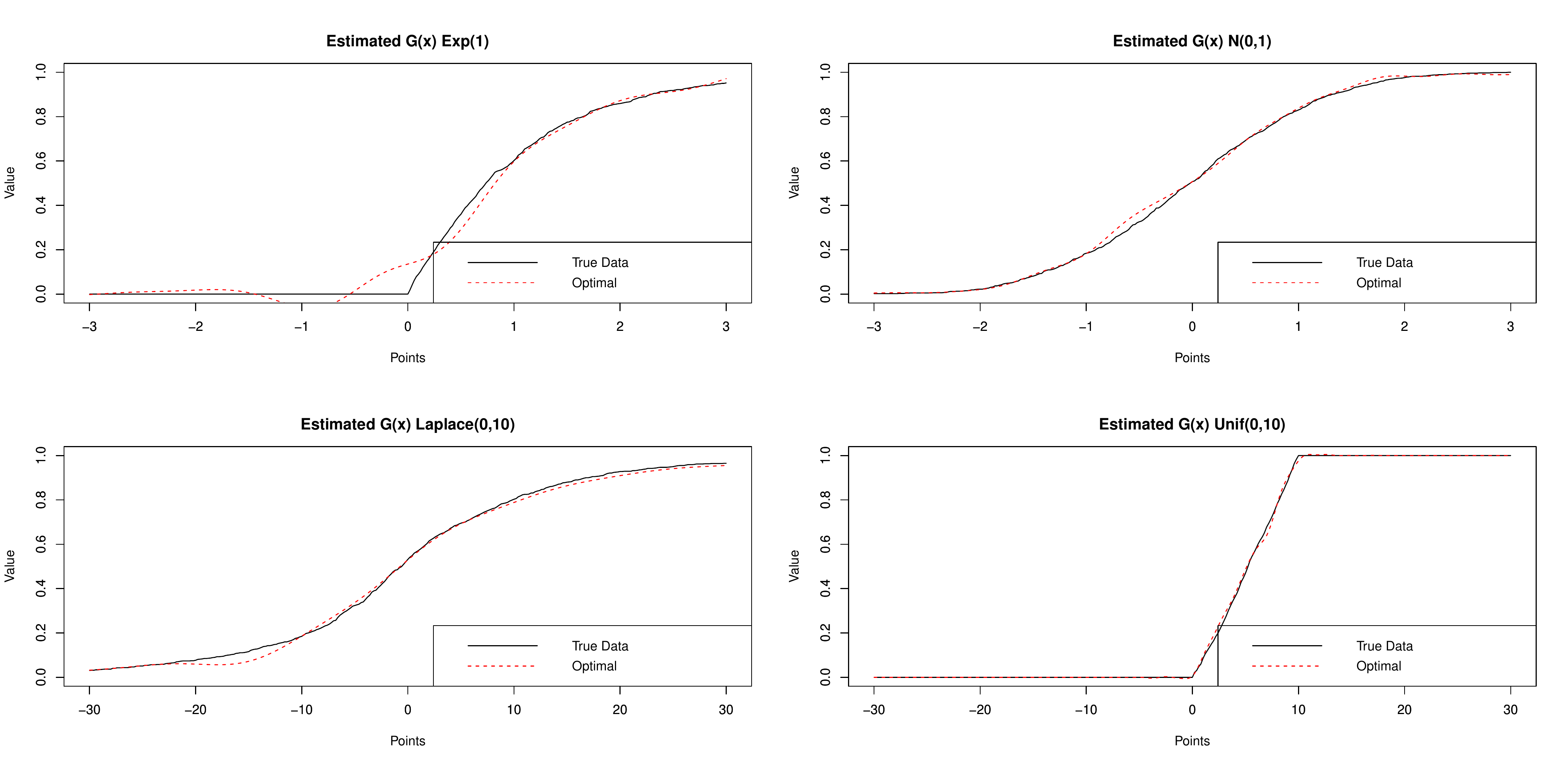}
\caption{Figure showing estimated distribution curves with optimal gamma pair for four different simulated data-sets}
\label{fig:Gx}
\end{figure}

To illustrate the results, here, we simulate a sample of size $n=1000$ from four common distributions to check if the process works. The distributions we chose are Exponential with mean $1$, a standard normal data, a Laplace data with scale $10$ and mean $0$ and a $Uniform(0,10)$ data. The Gamma density parameter pairs ($\vartheta,\eta$) that are eligible for obfuscation, using the discussed method keeping $Q=0.75$ and $\delta=0.9$, are the points that lie above the graph in Figure \ref{fig:pair} . The optimal pair, i.e., the pair that minimizes the $MISE$ (or, the term $\int_{-\infty}^{\infty}{\frac{|\tilde{K}(tb)|^2}{|\tilde{f}_{(\vartheta,\eta(\vartheta))}(t)|^2}dt}$ with respect to $\vartheta$ as mentioned in Section 3) is highlighted with a dot.

One may look at the graphs in Figure \ref{fig:pair}, and note that the optimal point can sometimes be different from being Laplace ( i.e., $\vartheta=1$). In the four cases, considered here, the optimal shape parameters were respectively $(1, 0.8, 1, 0.95)$ with their corresponding required scale parameters $(1.11, 0.548, 9.624, 1.314)$. The graphs of the estimated distribution curves are given in Figure \ref{fig:Gx} .

However, Figure \ref{fig:Gx} do not reveal much about the underlying errors in estimation. Since, there is no existing theory to calculate the standard error in estimation due to noise addition, we try to get a Monte-Carlo estimate of the same. We calculate an approximation to the Mean Squared Error between the true and estimated distribution curve as a measure of error associated in estimation, i.e., given an estimate $\hat{G}(x)$ of the true distribution curve $G(x)$, the error is given by $\frac{1}{N}\sum_{j=1}^N{(\hat{G}(x_j)-G(x_j))^2}$, where $x_j$'s are equidistant points over the range of $G$ with $x_1<x_2< \cdots < x_N$. Since, the range of $G$ is $(-\infty,\infty)$ in case of $N(0,1)$, we used the 6-$\sigma$ range, i.e., $(-3,3)$ which has confidence more than $(1-10^-8)$. We iterated the process $500$ times and calculated the Monte-Carlo estimates of the errors.

One important question that remains unanswered here is how to compare the results of estimation error due to the different noise additions. To do that, let $S_e$ denote the error due to sampling, i.e., the error in estimation of $G(x)$ when there is no error added to the data. Let $L_e$ and $O_e$ respectively denote the error in estimation when Laplace noise is added and noise with optimal parameters is added. The four simulation cases we studied gave us an optimal parameter different from Laplace in two cases. One, when the sample was from $N(0,1)$, and when it was from $U(0,10)$. We calculated the errors in estimation in these cases to study the difference in errors. The following table gives us the result of our study.
\begin{flushleft}
\begin{small}
\begin{tabular}{|c|c|c|c|c|c|c|}
\hline
True Distribution & $S_e$ & $L_e$ & $O_e$ & $\frac{L_e-S_e}{L_e}$ & $\frac{O_e-S_e}{O_e}$ &  $\frac{L_e-S_e}{O_e-S_e}$\\
\hline
$N(0,1)$ & $4.455999.10^{-05}$ & 0.0002517 & 0.0002136 & 0.8229644 & 0.7914206 & 1.225137 \\
\hline
$Uniform(0,10)$ & $3.466508.10^{-05}$ &0.0001798 & 0.0001622 & 0.8071546 & 0.786292 & 1.137586 \\
\hline
\end{tabular}
\end{small}
\end{flushleft}
\medskip

\noindent
Note that the fifth and sixth columns in the table give us an idea about the proportion of error explained due to Laplace and Optimal Noise addition. The last column gives us an idea of the ratio of the errors explained due to the two types of noise addition. One can clearly see that in both the cases the optimal density is a better choice than the Laplace density. However, the significance of the improvement still remains a question.

\section{Conclusion}

Although we developed a method of finding an optimal density to obfuscate a given data-set within the Gamma family given a desired amount of protection to the data we still do not know how significant this improvement is over the Laplace Error. Moreover, we do not know if any other density other than the Gamma  family would be suitable for de-convolution. It would be a fruitful improvement to the existing theories if one can theoretically calculate the standard errors in estimation of the distribution curve. However, these questions are still open and requires a good amount of attention in future.

\section*{Appendix: Proofs}

\noindent{\it Proof of Lemma~\ref{whygamma}}
The Fourier transform of $f_{\vartheta,\eta}(\cdot)$, given by Equation~\eqref{Eqn:GammaFamily} is
\begin{eqnarray*}
\tilde{f}_{\vartheta,\eta}(t)&=&\int_{-\infty}^{\infty}{e^{-itx}f_{\vartheta,\eta}(x)dx}
= \frac{1}{2}((1+it\eta)^{-\vartheta}+(1-it\eta)^{-\vartheta})\\
&=& \left|\frac{1}{2}\mathtt{r}^{-\vartheta}(e^{-i \Theta \vartheta}+e^{i \Theta \vartheta}) \right|
=\mathtt{r}^{-\vartheta}\cos(\Theta \vartheta),
\end{eqnarray*}
where $\mathtt{r}=\sqrt{1+t^2\eta^2}$ and $\Theta=\tan^{-1}(t\eta)$.

Suppose $\eta>1$. Then
$$\mathtt{r}^2>1+t^2>\frac12\{(1+t^2)+(1+t^2)\}\ge\frac12(1+t^2+2|t|)=\frac12(1+|t|)^2,$$
and therefore
\begin{eqnarray*}
\mathtt{r}^{-\vartheta}\cos(\Theta \vartheta) &\le&2^{\vartheta/2}(1+|t|)^{-\vartheta}.
\end{eqnarray*}
Further,
$$\mathtt{r}^2<\eta^2+t^2\eta^2<\eta^2(1+t^2+2|t|)=\eta^2(1+|t|)^2,$$
which means
$$\mathtt{r}^{-\vartheta}\cos(\Theta \vartheta) >\eta^{-\vartheta}(1+|t|)^{-\vartheta}\cos\left(\tan^{-1}(t\eta)\vartheta\right),
\ge\eta^{-\vartheta}\cos\left(\frac{\pi\vartheta}{2}\right)(1+|t|)^{-\vartheta},$$
since $\vartheta\le1$ and $-\pi/2\le\tan^{-1}(t\eta)\le\pi/2$.
Therefore, \eqref{Eqn:OrdSmth} holds with $c_1=\eta^{-\vartheta}\cos(\frac{\pi\vartheta}{2})$, $c_2=2^{\vartheta/2}$ and $\tilde{\vartheta}=\vartheta$.

Now suppose $\eta\le1$. Then
$$\mathtt{r}^2\ge\eta^2(1+|t|^2)>\frac{\eta^2}{2}(1+|t|)^2,$$
and so
\begin{eqnarray*}
\mathtt{r}^{-\vartheta}\cos(\Theta \vartheta) &\le&2^{\vartheta/2}\eta^{-\vartheta}(1+|t|)^{-\vartheta}.
\end{eqnarray*}
In addition
$$\mathtt{r}^2<1+t^2<1+t^2+2|t|=(1+|t|)^2,$$
which means
\begin{eqnarray*}
\mathtt{r}^{-\vartheta}\cos(\Theta \vartheta) &>&\cos\left(\frac{\pi\vartheta}{2}\right)(1+|t|)^{-\vartheta}.
\end{eqnarray*}
Therefore, \eqref{Eqn:OrdSmth} holds with $c_1=\cos(\frac{\pi\vartheta}{2})$, $c_2=2^{\vartheta/2}\eta^{-\vartheta}$ and $\tilde{\vartheta}=\vartheta$.

When $\vartheta>1$, the quantity $\cos(\Theta \vartheta)$ becomes zero whenever $t=\tan[\pi/(2\vartheta)]/\eta$. Therefore, the lower bound of \eqref{Eqn:OrdSmth} does not hold.

\bigskip\noindent{\it Proof of Lemma~\ref{Res:NNobfs}}.
Equation~\eqref{Eqn:M} simplifies in the present case as follows.
\begin{eqnarray*}
M_{z,\epsilon}^{(f)}&=&\dfrac{\frac{1}{\sigma_X \sigma_Y{2\pi}}\int_{-\sigma_X \epsilon}^{\sigma_X\epsilon}{e^{-\frac{(z-x)^2}{2\sigma_X^2}-\frac{x^2}{2\sigma_Y^2}dx}}}{\frac{1}{\sqrt{2\pi(\sigma_X^2+\sigma_Y^2)}}e^{-\frac{z^2}{2(\sigma_X^2+\sigma_Y^2)}}}\\
&=&\frac{1}{\sqrt{2\pi}}\sqrt{\frac{1}{\sigma_X^2}+\frac{1}{\sigma_Y^2}} \int_{\sigma_X \epsilon}^{\sigma_Y \epsilon}{e^{-\frac{(x-\frac{z\sigma_Y^2}{\sigma_Y^2+\sigma_X^2})^2}{2/(\frac{1}{\sigma_X^2}+\frac{1}{\sigma_Y^2})}}dx}\\
&=&\Phi \left( \left(\sigma_X \epsilon-\frac{z\sigma_Y^2}{\sigma_Y^2+\sigma_X^2} \right)\sqrt{\frac{1}{\sigma_X^2}+\frac{1}{\sigma_Y^2}} \right) - \Phi \left( \left(-\sigma_X \epsilon-\frac{z\sigma_Y^2}{\sigma_Y^2+\sigma_X^2} \right)\sqrt{\frac{1}{\sigma_X^2}+\frac{1}{\sigma_Y^2}} \right).
\end{eqnarray*}
The last expression has derivative equal to 0 only at $z=0$ and second derivative at $z=0$ equal to $-2\sigma_X \epsilon (\frac{1}{\sigma_X^2}+\frac{1}{\sigma_Y^2})^{3/2} \frac{z^2 \sigma_Y ^2}{\sigma_X^2 (\sigma_X^2+\sigma_Y^2)}<0$. Therefore, the unique maximum is at $z=0$. It follows that
$$ \sup_{z \in \mathbb{R}}M_{z,\epsilon}^{(f)}= \Phi \left(\sqrt{1+\frac{\sigma_X^2}{\sigma_Y^2}} \cdot \epsilon \right)-\Phi \left(-\sqrt{1+\frac{\sigma_X^2}{\sigma_Y^2}} \cdot\epsilon \right)=2 \Phi \left(\sqrt{1+\frac{\sigma_X^2}{\sigma_Y^2}} \cdot \epsilon \right)-1.$$
and that
$$ \mu^{(f,\delta)}=\min\{\epsilon>0:\Phi \left(\sqrt{1+\frac{\sigma_X^2}{\sigma_Y^2}} \cdot \epsilon \right) \geq \frac{1+\delta}{2}\}
=\dfrac{\tau_{\frac{1+\delta}{2}}}{\sqrt{1+\frac{\sigma_X^2}{\sigma_Y^2}}}.$$

\bibliographystyle{plain}
\bibliography{references_all}

\begin{thebibliography}{10}

\bibitem{CNAM}
C.~Andrieu, N.~de~Freitas, A.~Doucet, and M.~Jordan.
\newblock An introduction to mcmc for machine learning.
\newblock {\em Machine Learning}, 50(1):5--43, 2003.

\bibitem{CH88}
R.J. Carroll and P.~Hall.
\newblock Optimal rates of convergence for deconvolving a density.
\newblock {\em Journal of the American Statistical Association}, 83:1184--1186,
  1988.

\bibitem{TDSPR}
T.~Dalenius and S.P. Reiss.
\newblock Data-swapping: A technique for disclosure control.
\newblock {\em Journal of Statistical Planning and Inference}, 6(1):73--85,
  1982.

\bibitem{ADIG}
A.~Delaigle and I.~Gijbels.
\newblock Practical bandwidth selection in deconvolution kernel density
  estimation.
\newblock {\em Computational Statistics and Data Analysis}, 45(2):249--267,
  2004.

\bibitem{CS}
Cynthia Dwork and Adam Smith.
\newblock Differential privacy for statistics: What we know and what we want to
  learn.
\newblock Journal of Privacy and Confidentiality, 2009.

\bibitem{FAN}
J.~Fan.
\newblock Deconvolution with supersmooth distributions.
\newblock 20(2):155--169, 1992.

\bibitem{FNK}
O.~Frank.
\newblock An application of information theory to the problem of statistical
  disclosure.
\newblock {\em Journal of Statistical Planning and Inference}, 2(2):143--152,
  1978.

\bibitem{GR}
D.~Ghatak and B.~Roy.
\newblock Estimation of true quantiles from quantitative data obfuscated with
  additive noise.
\newblock {\em Journal of Official Statistics}, 34(3):671--694, 2018.

\bibitem{JKWD}
J.~Gouweleeuw, P.~Kooimann, L.~Willenberg, and P.P. Dewolf.
\newblock Post randomization for statistical disclosure control: theory and
  implementation.
\newblock {\em Journal of Official Statistics}, 14(4):463--478, 1998.

\bibitem{RH}
Rob Hall.
\newblock New statistical applications for differential privacy.
\newblock 2012.

\bibitem{MS}
J.~Mares and N.~Shlomo.
\newblock Data privacy using an evolutionary algorithm for invariant pram
  matrices.
\newblock {\em Computational Statistics and Data Analysis}, 79:1--13, 2004.

\bibitem{AM}
Alexander Meister.
\newblock {\em Deconvolution Problems in Nonparametric Statistics}.
\newblock Springer-Verlag, Berlin, 2009.

\bibitem{MRA}
R.A. Moore.
\newblock {\em Controlled Data Swapping Techniques for Masking Use Microdata
  Sets}, volume RR96-04 of {\em Statistical Research Division Report Series}.
\newblock US Bureau of the Census, Statistical Research Division, 1996.

\bibitem{KRR}
K.~Muralidhar, R.~Parsa, and R.~Sarathy.
\newblock A general additive data perturbation method for database security.
\newblock {\em Management Science}, 45(10):1399--1415, 1999.

\bibitem{NAS}
T.K. Nayak and S.A. Adeshiyan.
\newblock On invariant post randomization for statistical disclosure control.
\newblock {\em International Statistical Review}, 84(1):26--42, 2015.

\bibitem{TAZ}
T.K. Nayak, S.A. Adeshiyan, and C.~Zhang.
\newblock A concise theory of randomized response techniques for privacy and
  confidentiality protection.
\newblock In Arijit Chaudhuri, Tasos~C. Christofides, and C.R. Rao, editors,
  {\em Handbook of Statistics}, volume~34, pages 273--286. Elsevier, 2016.

\bibitem{TBL}
T.K. Nayak, B.K. Sinha, and L.~Zayatz.
\newblock Statistical properties of multiplicative noise masking for
  confidentiality protection.
\newblock {\em Journal of Official Statistics}, 27(3):527--544, 2011.

\bibitem{GP}
G.~Paass.
\newblock Disclosure risk and disclosure avoidance for microdata.
\newblock {\em Journal Of Business and Economic Statistics}, 6(4):487--500,
  1988.

\bibitem{WSWB}
W.H. Press, S.A. Teukolsky, W.T. Vetterling, and B.P. Flannery.
\newblock {\em Numerical Recipes: The Art of Scientific Computing}.
\newblock Cambridge University Press, third edition, 2007.

\bibitem{DJJ}
D.~Rebollo-Monedero, J.~Forne, and J.~Domingo-Ferrer.
\newblock From t-closeness-like privacy to postrandomization via information
  theory.
\newblock {\em IEEE Transactions on Knowledge and Data Engineering},
  22(11):1623--1636, 2010.

\bibitem{JRSK}
J.P. Reiter and S.K. Kinney.
\newblock Inferentially valid, partially synthetic data: Generating from
  posterior predictive distributions not necessary.
\newblock {\em Journal of Official Statistics}, 28(4):583--590, 2012.

\bibitem{DBR}
D.B. Rubin.
\newblock Discussion: Statistical disclosure limitation.
\newblock {\em Journal of Official Statistics}, 9(2):461--468, 1993.

\bibitem{RK}
R.~Sarathy, K.~Muralidhar, and R.~Parsa.
\newblock Perturbing non-normal confidential attributes: The copula approach.
\newblock {\em Management Science}, 48(12):1613--1627, 2002.

\bibitem{BWS}
B.W. Silverman.
\newblock {\em Density Estimation for Statistics and Data Analysis}.
\newblock Chapman and Hall, London, 1986.

\bibitem{NCS}
N.C. Spruill.
\newblock The confidentiality and analytic usefulness of masked business
  microdata.
\newblock In {\em Proceedings of the Section on Survey Research Methods}, pages
  602--607. American Statistical Association, 1983.

\bibitem{STC}
Leonard~A. Stefanski and Raymond~J. Carroll.
\newblock Deconvolving kernel density estimators.
\newblock volume 21:2, pages 169--184. Statistics, 1990.

\bibitem{TDK}
P.~Tendick.
\newblock Optimal noise addition for preserving confidentiality in multivariate
  data.
\newblock {\em Journal of Statistical Planning and Inference}, 27(3):341--353,
  1991.

\bibitem{WZ}
L.~Wasserman and S.~Zhou.
\newblock A statistical framework for differential privacy.
\newblock volume 105:489, pages 375--389. Journal of the American Statistical
  Association, 2010.

\end{thebibliography}
\end{document}